\newtheorem{theorem}{\sc Theorem}
\newtheorem{lemma}{\sc Lemma}
\newtheorem{proposition}{\sc Proposition}
\newtheorem{prope}{\sc Property}
\newtheorem{coro}{\sc Corollary}
\newtheorem{nota}{\sc Notation}
\newtheorem{defin}{\sc Definition}
\newtheorem{cla}{\sc Claim}
\newtheorem{rem}{\sc Remark}
\newenvironment{proof}{\par \sc Proof.\rm}{\hspace*{\fill}$\Box$\vspace{1ex}}
\begin{document}

\title{Tree Nash Equilibria in the \\ Network Creation Game}
\author{Akaki Mamageishvili \and Mat\'u\v{s} Mihal\'ak \and
Dominik M\"uller}
%\institute{Institute of Theoretical Computer Science, ETH Zurich}
\maketitle

\begin{abstract}
  In the network creation game with $n$ vertices, every vertex (a player) buys a
  set of adjacent edges, each at a fixed amount $\alpha>0$. 
  %
%  We study questions related to the existence of trees as Nash equilibria. 
  %
  It has been conjectured that for $\alpha \geq n$, every Nash equilibrium is a
  tree, and has been confirmed for every $\alpha \geq 273\cdot n$.
  We improve upon this bound and show that this is true for every $\alpha \geq
  65\cdot n$.
  To show this, we provide new and improved results on the local structure of
  Nash equilibria. Technically, we show that if there is a cycle in a Nash
  equilibrium, then $\alpha < 65\cdot n$. Proving this, we only consider
  relatively simple strategy changes of the players involved in the cycle.
  We further show that this simple approach cannot be used to show the desired
  upper bound $\alpha < n$ (for which a cycle may exist), but conjecture that a
  slightly worse bound $\alpha < 1.3 \cdot n$ can be achieved with this
  approach. Towards this conjecture, we show that if a Nash equilibrium has a
  cycle of length at most 10, then indeed $\alpha < 1.3\cdot n$.
  We further provide experimental evidence suggesting that when the girth of
  a Nash equilibrium is increasing, the upper bound on $\alpha$ obtained by the
  simple strategy changes is not increasing.
  To the end, we investigate the approach for a coalitional variant of Nash
  equilibrium, where coalitions of two players cannot collectively improve, and
  show that if $\alpha\geq 41\cdot n$, then every such Nash equilibrium is a
  tree.
%
%  \begin{keywords}
%    Network creation game, tree Nash equilibrium, Price of anarchy
%  \end{keywords}
\end{abstract}

%%%%%%%%%%%%%%%%%%%%%%%%%%%%%%%%%%%%%%%%%%%%%%%%%%%%%%%%%%%%%%%%%%%%%%%%%%%%%%%%
%%%   Introduction
%%%%%%%%%%%%%%%%%%%%%%%%%%%%%%%%%%%%%%%%%%%%%%%%%%%%%%%%%%%%%%%%%%%%%%%%%%%%%%%%

\section{Introduction}

Network creation game has been introduced by Fabrikant et
al.\,\cite{Fabrikant+etal/2003} as a formal model to study the effects of
strategic decisions of economically motivated agents in decentralized networks
such as the Internet. In such networks, local decisions including those about
infrastructure are decided by autonomous systems. Autonomous systems follow
their own interest, and as a result, their decisions may be sub-optimal for the
whole society. 
Network creation games allow to formally study the structure of networks created
in such a manner, and to compare them with potentially optimal networks (optimal
with respect to the whole society).
In the network creation game, there are $n$ players $V=\{1,\ldots,n\}$, each
representing a vertex of an undirected graph. The strategy $s_i$ of a player $i$
is to create (or buy) a set of adjacent edges, each at a fixed amount $\alpha >
0$. The played strategies $s=(s_1,\ldots,s_n)$ collectively define an edge-set
$E_s$, and thus a graph $G_s=(V,E_s)$. The goal of every player is to minimize
its cost $c_i$, which is the amount paid for the edges (creation cost), plus the
total distances of the player to every other node of the resulting network $G$
(usage cost), i.e., 
\begin{displaymath}
  c_i(s) = \alpha \cdot |s_i| + \sum_{j=1}^{n} \text{dist}(i,j)\text{,}
\end{displaymath}
where $\text{dist}(i,j)$ denotes the distance between $i$ and $j$ in the
resulting network $G$.

A strategy vector $s=(s_1,\ldots,s_n)$ is a \emph{Nash equilibrium} if no player
$i$ can change the set $s_i$ of created edges to another set $s_i'$ and improve
its cost $c_i$. Abusing the definition, the resulting graph $G_s$ itself is
called a Nash equilibrium, too, and we define its (social) cost $c(G)$ to be
the cost $c(s)$, i.e., the cost of the corresponding strategy vector $s$.
The \emph{social cost} $c(s)$ of strategy vector $s$ is the sum of the
individual costs, i.e., $c(s)=\sum_{i=1}^n c_i(s)$.
It is a trivial observation to see that in any Nash equilibrium $G_s$, no edge is
bought more than once. From now on, we only consider such strategy vectors, and
observe then that 
\begin{displaymath}
  c(s) := \sum_{i=1}^n c_i(s) = \alpha\cdot |E_s| + \sum_{i=1}^n \sum_{j=1}^n
  d(i,j)\text{.}
\end{displaymath}

A graph $G=(V,E)$ can be created by many strategy vectors $s$ (precisely in
$2^{|E|}$ many ways, because every edge in $E$ can be bought by exactly one of its
endpoints), but each of such realizations has the same social cost. 
Graph $G^*=(V,E)$ is an \emph{optimum} graph, if it minimizes the \emph{social cost}
$c(s)$ (for any strategy vector $s$ for which $G_s = G$).

Let $\mathcal{N}$ denote the set of all Nash equilibria of a network creation
game on $n$ vertices and edge-price $\alpha$. The \emph{price of anarchy} (PoA)
of the network creation game is the ratio 
\begin{displaymath}
  \text{PoA} = \max_{s\in \mathcal{N}} \frac{c(G_s)}{c(G^*)}\text{.}
\end{displaymath}
Price of anarchy expresses the (worst-case) loss of the quality of a network that
the society could achieve.

In a series of papers
\cite{Fabrikant+etal/2003,Albers+etal/2006,Demaine+etal/2012,Mihalak+Schlegel/2013} it has been shown that the price of anarchy of the
network creation game is $O(1)$, i.e., a constant independent of both $n$ and
$\alpha$, for every value $\alpha>0$ with the exception of the range
$n^{1-\varepsilon} < \alpha < 273\cdot n$, where $\varepsilon =\Omega(\frac{1}{\log
n})$.
For the value of $\alpha$ with $n^{1-\varepsilon} < \alpha < 273\cdot n$, an
upper bound of $2^{\sqrt[n]{\log n}}$ on the price of anarchy is known (while no
Nash equilibrium with considerably large social cost is known). 
It is conjectured, however, that the price of anarchy is constant also in this
range of $\alpha$. It remains a major open problem to confirm or disprove this
conjecture.
It is certainly of interest to note that there are several variants of the
network creation game (see, e.g.,
\cite{Alon+etal/2013,Brautbar+Kearns/2011,Ehsani+etal/2011,Bilo+Guala+Proietti/2012}),
but in none of these, with the exception of \cite{Demaine+Zadimoghaddam/2010},
the price of anarchy could be shown to be constant.

Understanding the structure of Nash equilibria has proven to be important in
bounding the price of anarchy. Fabrikant et al.\,\cite{Fabrikant+etal/2003}
showed that the social cost of any tree $G$ in Nash equilibrium is upper-bounded
by $O(1)\cdot c(G^*)$. 
Therefore, the price of anarchy is $O(1)$ for all values of $\alpha$ for which
every Nash equilibrium is a tree.
It has been shown that every Nash equilibrium is a tree for all values of
$\alpha$ greater than $n^2$, $12n\log n$, and $273 n$, respectively, in
\cite{Fabrikant+etal/2003},\cite{Albers+etal/2006}, and
\cite{Mihalak+Schlegel/2013}.
It has been conjectured that every Nash equilibrium is a tree for every
$\alpha\geq n$. Since for $\alpha=n/2$, non-tree Nash equilibria are known, this
\emph{tree conjecture} is asymptotically tight. 

In this paper, we make steps in the direction of resolving the tree conjecture. 
We first tighten the tree conjecture and provide a construction of a non-tree
Nash equilibrium for every $\alpha = n-3$ (thus, showing that, asymptotically,
one cannot hope to show that every Nash equilibrium is a tree for some value
$\alpha < n$).
We then apply a ``linear-programming-like'' approach to show that for $\alpha
\geq 65n$, every Nash equilibrium is a tree. To show this, we obtain new
structural results on Nash equilibria and combine them with the previous
approach of \cite{Mihalak+Schlegel/2013}.
Towards the end, we make further steps towards the conjecture. 
We show that if $\alpha \geq n$, then there is no non-tree Nash equilibrium
containing exactly one cycle. 
We then apply the ``linear-programming-like'' approach again to show that the
girth of every non-tree Nash equilibrium (for any $\alpha \geq n$) is at least
6.
Using the same ideas, we show that if a non-tree Nash equilibrium has girth at
most 10, then $\alpha \leq 1.3 n$. By further experimental results, we
conjecture that this holds for any girth, i.e., that non-tree Nash equilibria
can appear only for $\alpha \leq 1.3 n$.
%

%%%%%%%%%%%%%%%%%%%%%%%%%%%%%%%%%%%%%%%%%%%%%%%%%%%%%%%%%%%%%%%%%%%%%%%%%%%%%%%%
%%%   Preliminaries
%%%%%%%%%%%%%%%%%%%%%%%%%%%%%%%%%%%%%%%%%%%%%%%%%%%%%%%%%%%%%%%%%%%%%%%%%%%%%%%%

\section{Preliminaries}

In the following, we will often denote the considered Nash equilibrium graph
$G_s=(V,E_s)$ of a network creation game with $\alpha>0$ simply as $G=(V,E)$.
Even though the graph $G_s$ is undirected, we will often direct the edges to
express the identity of the player which bought the edge in $s$; An edge
$\{u,v\}$ directed from $u$ to $v$ denotes the fact that $u$ bought/created the
edge in $s$.

Every non-tree $G$ contains a cycle. Let $c$ be the length of a shortest cycle
$C$ in $G$, and let $a_0,a_1,\ldots,a_{c-1}$ be the players that form one such
shortest cycle, and where $\{a_i,a_{i+1}\}\in E$ for every
$i=0,1,\ldots,c-1$ (where indices on vertices of the cycle are in the whole
paper to be understood modulo $c$). Observe the crucial property of a shortest
cycle $C$: the distance between $a_i$ and $a_j$ in the graph $G$ is equal to the
distance between $a_i$ and $a_j$ on the cycle $C$.

We will consider the players on the cycle $C$ and their strategy-changes that
involve only the $c$ edges of the cycle. For each strategy-change $s_{a_i}'$ of
player $a_i$, we obtain an inequality $c_i(s) \leq
c_i(s_1,\ldots,s_{a_i}',\ldots,s_n)$ stating simply the fact that in a Nash
equilibrium $s$, player $a_i$ cannot improve by changing its strategy.
We will often express such an inequality in the form of ``SAVINGS'' $\leq$
``INCREASE'', where ``SAVINGS'' denotes the parts of $c_i(s)$ that decreased
their value in $c_i(s')$, and ``INCREASE'' denotes the parts of $c_i(s)$ that
increased their value in $c_i(s')$.
For example, assume that $a_i$ buys the edge $e=\{a_i,a_{i+1}\}$ (i.e., $e\in
s_i$), and let us consider the strategy change where $a_i$ \emph{deletes} the
edge $e$ (i.e., $s_i' = s_i\setminus\{e\}$). Recall that $c_i(s) =
\alpha\cdot|s_i| + \sum_j d(i,j)$. Then, in such a strategy change, the
``SAVINGS'' are clearly on the edge-creation side, i.e., the player $a_i$ saves
$\alpha$ for not paying for the edge $e$. At the same time, some distances of
player $i$ may have increased -- the distance to a vertex $v$ increases, if in
$G_s$ every shortest path from $a_i$ to $v$ uses the deleted edge $e$. 
But the distance to $v$ could have increased by at most $c-2$ (as before, $a_i$
needed to go to vertex $a_{i+1}$ but now the vertex $a_{i+1}$ can be reached
``around'' the cycle).
Because of the Nash equilibrium property of $s$, we have ``SAVINGS'' $\leq$
``INCREASE'', which implies $\alpha \leq (c-2)(n-1)$ (as the distance to at
most $n-1$ vertices could have increased).

In the following, we will use slightly more involved forms of the just
described inequalities. For that reason, we will partition the vertices
according to their distances to the vertices from the cycle.
Let us fix a vertex $v\in V$. Let $G\setminus C$ be the graph $G$ without the
$c$ edges of the cycle $C$. Let us denote the distances of $v$ to the vertices
$a_0,a_1,\ldots,a_{c-1}$ in $G\setminus C$ by the vector
$d(v)=(d_0,d_1,\ldots,d_{c-1})$, respectively, where $d_i=\infty$ if $a_i$ and
$v$ are disconnected in $G\setminus C$. We call $d_i$ the \emph{outer distance}
of $v$ to $a_i$ in the Nash equilibrium $G$, and $d$ the \emph{vector of outer
distances} of $v$ in $G$. We now partition the vertices of $V$ by
this vector of outer distances. We will coarsen the partition in the
following way.
Observe that $d_s(a_i,v)$ in $G_s$ is now equal to $\min_j (d_s(a_i,a_j) + d_j)$,
because there always is a shortest path from $a_i$ to $v$ that first uses a part
of the cycle $C$ (until vertex $a_j$), leaves $C$ and never comes back to $C$.
Therefore, $\min_j d_j \leq d_s(a_i,v) \leq (c-1) + \min_j d_j)$. Moreover, for
any strategy change $s_i'$ of player $a_i$ which leaves $a_i$ connected by an
edge to a vertex of $C$, we still have $\min_j d_j \leq d_{s'}(a_i,v)\leq (c-1)
+ \min_j d_j$ (because there is a path from $a_i$ to the vertex $a_j$ of
smallest entry $d_j$ using the edge and the remaining of the cycle).
Because we are interested in the changes of the distances from $a_i$, i.e., in
the value of $\Delta:=d_{s'}(a_i,v) - d_s(a_i,v)$, we can normalize the vector
$d(v)$ by subtracting $\min_j d_j$ from each of the elements
$d_0,d_1,\ldots,d_{c-1}$ (which does not change the value of $\Delta$).
Observe that after the normalization, there is an entry $d_i$ equal to zero.
We will ``normalize'' the entries further more. Since we are interested in the
value $\Delta$, we can handle all entries $d_j\geq c-1$ in the same way: they do
not have any influence on $\Delta$ at all (no shortest path from vertex $a_i$,
$i\neq j$, will ever use $a_j$ to reach vertex $v$). We will therefore further
modify the vector $d$ by substituting every entry $d_j \geq c-1$ with the value
$c-1$. 

\begin{figure}[t]
  \centering
  \includegraphics[width=0.4\textwidth]{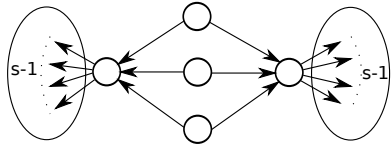}
  \caption{Non-tree Nash equilibrium for $n=2s+3$ players and $\alpha=n-3$. An
  edge directed from a node $u$ to a node $v$ denotes that $u$ buys the edge.}
  \label{fig:non-tree_NE}
\end{figure}

This gives partition of all vertices into groups $V_d$, where each group has
associated vector of ``normalized'' outer distances $d=(d_0,\cdots, d_{c-1})$,
one of the distances is necessarily equal to $0$ and all the distances are upper
bounded by $c-1$. Vertices which have vector of outer distances $d'$ containing
numbers greater than $c-1$ are associated with the group having a vector $d''$
obtained from $d'$ where all entries greater than $c-1$ are changed to $c-1$. In
this way, there are $t = c^c-(c-1)^c$ groups. We denote the set of all
``normalized'' distance vectors by $D$. Trivially, as $V_d$, $d\in D$, form a
partition of $V$, $\sum_{d\in D}|V_d|=n$.

%%%%%%%%%%%%%%%%%%%%%%%%%%%%%%%%%%%%%%%%%%%%%%%%%%%%%%%%%%%%%%%%%%%%%%%%%%%%%%%%
%%%   Upper bound on ALPHA for existence of non-tree NE
%%%%%%%%%%%%%%%%%%%%%%%%%%%%%%%%%%%%%%%%%%%%%%%%%%%%%%%%%%%%%%%%%%%%%%%%%%%%%%%%

\section{Bounds on $\alpha$ for existence of cycles}
We first give in Fig.~\ref{fig:non-tree_NE} a construction of a non-tree Nash
equilibrium graph for $n=2s+3$ vertices, and $\alpha=2s = n-3$, for any integer
$s$. This thus shows that the conjecture ``for $\alpha\geq n$, all Nash
equilibria are trees'' cannot be improved to ``for $\alpha \geq
(1-\varepsilon)n$, all Nash equilibria are trees''.
We now proceed and give a lower bound on the length of a shortest cycle in any
Nash equilibrium.

\begin{theorem} 
  \label{thm:no_short_cycle}
  The length $c$ of a shortest cycle $C$ in any Nash equilibrium is at least
  $\frac{2\alpha}{n} + 2$.

\end{theorem}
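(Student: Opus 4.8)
The plan is to use the simplest possible strategy change discussed in the preliminaries: have one player on the shortest cycle $C$ delete the cycle edge it bought, and express the resulting Nash inequality in the refined ``SAVINGS $\leq$ INCREASE'' form, but now accounting for the fact that many vertices are reached through the cycle on a path that does \emph{not} use the deleted edge, so their distance does not change at all. Concretely, fix a cycle edge $e=\{a_i,a_{i+1}\}$ bought by $a_i$ and consider $s_i'=s_i\setminus\{e\}$. The saving is exactly $\alpha$. For the increase, I would bound $\sum_v \Delta(v)$ where $\Delta(v)=d_{s'}(a_i,v)-d_s(a_i,v)\geq 0$.

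The key step is to argue that the total increase is \emph{much} smaller than the naive $(c-2)(n-1)$. The point is that deleting $e$ only affects vertices $v$ for which every shortest path from $a_i$ to $v$ passed through $e$; equivalently, using the outer-distance description, vertices whose ``closest entry point'' to the cycle lies on the far arc from $a_i$. Partition $V$ according to the vertex $a_j$ (or set of such vertices) realizing $\min_j(d_s(a_i,a_j)+d_j)$; only vertices entering the cycle near $a_{i+1}$ get any penalty, and the penalty for such a vertex is at most $d_s(a_{i+1},a_j)_{\text{around}} - 1$ compared to the direct step, which telescopes. Summing over the vertices grouped by which arc position they enter, the worst case is that every one of the $n-1$ other vertices is ``just past'' $a_{i+1}$, contributing increase at most... but crucially one sums a contribution that grows by $1$ for each step around the cycle and is $0$ for vertices entering on $a_i$'s side. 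Doing this carefully — or, more cleverly, averaging the inequality over all $c$ choices of which player deletes its cycle edge, or summing the $c$ inequalities — the per-vertex contribution to the sum of all $c$ increases is at most the sum $1+2+\cdots$ around the cycle, i.e. at most roughly $\binom{c}{2}$-type quantity, while the total saving is $c\alpha$. Balancing $c\alpha$ against $n\cdot(\text{something like } \tfrac{c(c-2)}{2}$ or $\tfrac{(c-2)^2}{2})$ and solving for $c$ should yield $c\geq \tfrac{2\alpha}{n}+2$.

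More precisely, I expect the clean route is: sum the $c$ deletion-inequalities, one per cycle edge. On the left we get $c\alpha$. On the right, fix a vertex $v$ with normalized outer-distance vector; among the $c$ players, the distance to $v$ increases only for those $a_i$ whose deleted edge is the unique bottleneck, and the total increase contributed by $v$ across all $c$ inequalities is at most $\sum_{k=1}^{c-1} \max(0,\dots)$, which is maximized (over placements of the ``entry arc'' of $v$) by a value of at most $\tfrac{(c-1)(c-2)}{2}$ — but in fact, because of the $\min_j$ structure and the $c-1$ truncation, the sharp bound per vertex is $\tfrac{(c-2)c}{2}$ over all $c$ players only when $v$ has a single entry point; one checks it never exceeds what is needed. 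This gives $c\alpha \leq n\cdot f(c)$ with $f(c)$ quadratic, and rearranging yields the stated linear-in-$\alpha/n$ lower bound on $c$.

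The main obstacle I anticipate is getting the constant exactly right: the naive single-deletion bound only gives $\alpha\leq (c-2)(n-1)$, i.e. $c\geq \tfrac{\alpha}{n-1}+2$, which is a factor of $2$ worse than claimed. Extracting the factor of $2$ improvement requires genuinely exploiting that, for a fixed target vertex $v$, at most \emph{one} (or a few adjacent) of the $c$ cycle players suffers the full $c-2$ penalty while the others suffer less or nothing — this is an amortization/averaging argument over the $c$ inequalities, and the care lies in correctly handling vertices with multiple shortest entry points into the cycle (ties in the $\min_j$) and in verifying that the truncation of outer distances at $c-1$ does not cost anything. I would double-check the edge cases $c=3,4$ by hand against the known constructions (e.g. the $\alpha=n-3$ example, where a $3$-cycle or similar should be consistent with $c\geq \tfrac{2(n-3)}{n}+2 > 3$ failing, confirming that such small cycles force small $\alpha$).
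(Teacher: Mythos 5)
Your plan of summing the $c$ single-edge-deletion inequalities and amortizing the per-vertex increase around the cycle is exactly the paper's argument --- but only for one of two cases, and you are silently assuming the favourable case, namely that the cycle edges are all bought ``in the same direction'' (player $a_i$ buys $\{a_i,a_{i+1}\}$ for every $i$). The gap is the ownership of the edges. Your amortized per-vertex bound of roughly $\frac{(c-1)(c-2)}{2}$ rests on the fact that when $a_i$ deletes $\{a_i,a_{i+1}\}$, its detour to the entry point $a_0$ (where the normalized outer distance is $0$) goes the \emph{other} way around the cycle and has length $i$, so the increase is at most $i-1$; summing $0+0+1+\dots+(c-2)$ gives the bound. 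If instead the edge $\{a_i,a_{i+1}\}$ is bought by $a_{i+1}$, the deleting player is $a_{i+1}$ and its detour to $a_0$ has length $c-1-i$, contributing up to $c-2-i$. For a mixed orientation --- which occurs exactly when some player buys \emph{both} of its adjacent cycle edges --- the worst case over positions of the zero entry yields a per-vertex total of about $c^2/2$ rather than the needed $c(c-2)/2$: for example, with $c=6$, edges $e_0,e_1,e_2$ pointing right, $e_3,e_4,e_5$ pointing left, and the zero at $a_3$, the method gives $2+3+4+4+3+2=18>12$, which only implies $c\geq \frac{2\alpha}{n}$ and loses the claimed ``$+2$''. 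Your proposal offers no mechanism for this situation; the concerns you do flag (ties in the $\min_j$, the truncation at $c-1$, small $c$) are comparatively harmless.

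The paper closes this gap with a case distinction. If every player buys exactly one adjacent cycle edge, the orientation is forced to be consistent and your summation works essentially verbatim, with total saving $c\alpha$ and per-vertex coefficient $(c-1)(c-2)/2$. If some player $a_0$ buys both $\{a_0,a_1\}$ and $\{a_0,a_{c-1}\}$, the paper switches to a different family of deviations: $a_0$ deletes \emph{both} edges and buys a single edge to $a_i$, for each $i=2,\dots,c-2$, giving $c-3$ inequalities with total saving at least $(c-3)\alpha$ and per-vertex coefficient at most $(c-2)(c-3)/2$, which again yields $\alpha\leq\frac{c-2}{2}n$. You would need to add this (or an equivalent) second argument to make the proof complete; your identification of amortization over the cycle as the source of the factor-$2$ improvement is, however, correct.
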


\begin{proof} 
  We distinguish two cases. First, assume that there is a player, which buys
  both its adjacent edges on the cycle $C$. Without loss of generality assume
  that this player is $a_0$.
  Consider the strategy change where $a_0$ deletes both these edges
  $\{a_0,a_1\}$ and $\{a_0,a_{c-1}\}$ and buys an edge towards player $a_i$ on
  the cycle, $i=2,\ldots,c-2$.
  The player cannot improve by such a change, and therefore ``SAVINGS'' $\leq$
  ``INCREASE''. 
  Here, the player saves at least $\alpha$ (by buying one edge less).
  Let us denote the increase of distances of player $a_0$ to the players of the
  group $V_d$ by $c_{i,d}$. Then we get that $\alpha \leq \sum_{d\in D}
  \delta_{i,d} |V_d|$. Summing up all the $c-3$ inequalities, one for every $i$,
  we get $(c-3)\alpha \leq \sum_{i=2}^{c-2} \sum_d \delta_{i,d} |V_d|$.

  We now show that for every $d$, the coefficient $\sum_i \delta_{i,d}$ at
  $|V_d|$ is at most $(c-2)(c-3)/2$. 
  Consider arbitrary $d=(d_0,d_1,\ldots,d_{c-1})$ of the outer distances of the
  vertices in $V_d$.
  Clearly, the strategy change of $a_0$ increases its distances to $V_d$ iff
  every shortest path from $a_0$ to $V_d$ goes through the deleted edges. 
  Thus, we can assume (for the worst-case) that $d_0=c-1$. 
  Assume that one shortest path (in $G_s$) leaves the cycle at $a_e$,
  $e\in\{1,\ldots,c-2\}$.
  In the new graph $G_{s'}$, player $a_0$ can always use the new edge
  $\{a_0,a_i\}$ and then go to $a_e$ on the remainder of the cycle $C$. 
  Thus, the increase of distances $\delta_{i,d}$ is at most
  $(1+|i-e|)-1=|i-e|$. In total, we obtain $\sum_{i=2}^{c-2} \delta_{i,d} \leq
  \sum_i |i-e| \leq \sum_i (i-1)=(c-3)(c-2)/2$, as claimed.
  Now, since $\sum_{d\in D}|V_d| = n$, we finally get that $\alpha \leq 
  \frac{(c-2)}{2}n$, which gives the claimed $c\geq \frac{2\alpha}{n} + 2$.
  
  Consider now the second case where no player buys two of its adjacent edges in
  $C$, i.e., every player buys exactly one edge. Without loss of generality
  assume that every player $a_i$ buys the edge $\{a_i,a_{i+1}\}$.
  For each player $i$, we consider the strategy change of deleting the edge
  $\{a_i,a_{i+1}\}$. Similarly to the previous case, we obtain $\alpha\leq
  \sum_{d\in D} \delta_{i,d} |V_d|$.
  Summing for every $i$, we get $c\alpha \leq \sum_{i=0}^{c-1} \sum_d
  \delta_{i,d} |V_d|$.
  We show this time that $\sum_{i=0}^{c-1} \delta_{i,d}$, the coefficient at
  $|V_d|$, is upper bounded by $1 + 2 + \dots + (c-2) = (c-2)(c-1)/2$.
  Consider an arbitrary $d=(d_0,\ldots,d_{c-1})\in D$, and assume without loss
  of generality that $d_0=0$. 
  For every player $a_i$, $\delta_{i,d}$ is at most $i-1$, because the
  worst-case increase in a distance of player $a_i$ to vertices $V_d$ happens
  when all shortest paths from $a_i$ used the deleted edge $\{a_i,a_{i+1}\}$.
  But because after the deletion, there is an alternative path from $a_i$ to
  $V_d$ using $a_0$, the increase is at most $i-1$.
  Thus, summing over all $i$, the total increase in distances to $V_d$ is at
  most $0+1+2+\dots+(c-2) = (c-2)(c-1)/2$ as claimed.
  Plugging this into our inequality, $c\alpha \leq \sum_i \sum_d \delta_{i,d}
  |V_d|$ and using the fact that $\sum_d |V_d|=n$, we obtain that
  $c > \frac{2\alpha}{n} + 2$.

\end{proof}

Let $H$ be a non-trivial biconnected component of a non-tree Nash equilibrium,
i.e., an induced subgraph of $H$ of at least three vertices containing no
bridge. For any vertex $v\in H$, let $S(v)$ be the set of vertices which do not
belong to $H$, and which have $v$ as the closest vertex among all vertices in
$H$. For any vertex $u\in H$, we define $deg_H(u)$ to be the degree of vertex
$u$ in the graph induced by $H$. Furthermore, we define $N_k(u)$ to be the
$k$-th neighborhood of $u$ in $H$, i.e., $N_k(u):=\{w\in H\ |\ d(u,w)\leq k\}$.
The following lemma has been shown in \cite{Mihalak+Schlegel/2013}. We will use
it to prove the subsequent lemma.

\begin{lemma}[\cite{Mihalak+Schlegel/2013}]
  \label{lem:no_two_opposite_edges}
  If $u, v \in V (H)$ are two vertices in $H$ with $d(u, v) \geq 3$ such that
  $u$ buys the edge to its adjacent vertex $x$ in a shortest $u-v$-path and $v$
  buys the edge to its adjacent vertex $y$ in that path, then $deg_H (x)
  \geq 3$ or $deg_H (y) \geq 3$.
\end{lemma}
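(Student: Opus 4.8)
Proof proposal for Lemma~\ref{lem:no_two_opposite_edges} (as I would attempt it; the excerpt says it is from \cite{Mihalak+Schlegel/2013}).

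The plan is to argue by contradiction: suppose $d(u,v)\ge 3$, that $u$ buys the edge $\{u,x\}$ on a shortest $u$--$v$ path $P$, that $v$ buys the edge $\{v,y\}$ on that same path, and that $\deg_H(x)\le 2$ and $\deg_H(y)\le 2$. Since $x$ lies on the biconnected component $H$, it has degree at least $2$ in $H$, so in fact $\deg_H(x)=2$; its two $H$-neighbours are $u$ and the next vertex $x'$ along $P$ (here I use $d(u,v)\ge 3$ to know $x\neq y$ and that $P$ really has an internal vertex past $x$). Likewise $\deg_H(y)=2$ with $H$-neighbours $v$ and the previous vertex $y'$ on $P$. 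I would then consider the strategy change in which $u$ \emph{swaps} the purchased edge $\{u,x\}$ for the edge $\{u,x'\}$ (a ``shortcut'' move), and symmetrically a swap move for $v$ replacing $\{v,y\}$ by $\{v,y'\}$. Because $s$ is a Nash equilibrium, neither swap is improving, which yields two inequalities of the ``SAVINGS $\le$ INCREASE'' type; here both moves are edge-count neutral, so the relevant balance is purely in distances.

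The heart of the argument is to quantify these distance changes. After $u$ swaps $\{u,x\}\to\{u,x'\}$, for every target $w$ the distance $d(u,w)$ changes by at most $1$ in either direction: it can only go \emph{up} if every shortest $u$--$w$ path went through $x$, and since $\deg_H(x)=2$ the only $H$-route out of $x$ other than back to $u$ is via $x'$, so $u$ can still reach $w$ by $u\to x' \to \cdots$; the subtle point (which is exactly where the hypothesis $\deg_H(x)=2$ is used, and where $S(x)$, the pendant vertices hanging off $x$, must be handled separately) is that vertices in $S(x)$ themselves move \emph{closer} or stay the same. One shows the net distance change of the swap is nonpositive, hence by Nash it is exactly zero, and the same for $v$. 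Equality forces rigidity: essentially, $x'$ must be ``behind'' $x$ from $u$'s perspective for all of $u$'s shortest paths, i.e.\ $x$ is a cut-like bottleneck even though it sits in a biconnected component—and symmetrically for $y$, $y'$, $v$.

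From these two equalities I would derive a contradiction with biconnectivity. The idea: zero net change for the $u$-swap means every shortest path from $u$ into the ``far side'' of $H$ (the side containing $v$) passes through $x$ then $x'$; combined with the symmetric statement at $y$, the subpath of $P$ strictly between $x'$ and $y'$ would be forced to be the \emph{unique} shortest route connecting the two sides, making an edge on it (or the vertex $x$, resp.\ $y$) a bridge (resp.\ cut vertex) of $H$, contradicting that $H$ is biconnected with at least three vertices. The main obstacle I anticipate is the careful bookkeeping of the $S(x)$ and $S(y)$ populations and of targets reachable from $u$ by paths avoiding $x$ entirely: one must make sure the ``INCREASE $\le$ SAVINGS$=0$'' inequality is genuinely tight only in the rigid configuration, and that the rigidity at $x$ and at $y$ can be combined (they involve possibly different shortest-path systems) without the two far sides overlapping. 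Handling the borderline case $d(u,v)=3$, where $x'=y'$ or the intervening segment is a single edge, will need a small separate check.
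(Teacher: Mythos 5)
First, a point of comparison: the paper does not prove this lemma at all --- it is imported verbatim from \cite{Mihalak+Schlegel/2013} and used as a black box in the proof of Lemma~\ref{lem:5-neighborhood}. So there is no in-paper argument to measure your proposal against; it has to stand on its own.

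As written, it does not. The pivotal step is your claim that the swap $\{u,x\}\to\{u,x'\}$ has ``nonpositive net distance change, hence by Nash it is exactly zero.'' That is not true, and nothing in the hypotheses makes it true. The swap strictly increases $u$'s distance to $x$ (from $1$ to $2$) and to every vertex of $S(x)$, the trees hanging off $x$ outside $H$, and $|S(x)|$ can be arbitrarily large; on the other side, the vertices that get closer ($x'$, $S(x')$, $v$, and whatever $u$ reaches only through $x$ and then $x'$) need not outnumber them. The Nash condition therefore yields only a one-sided inequality of the form $(\text{number of vertices that get closer}) \leq |S(x)|+1$, not an equality, and the entire chain ``rigidity $\Rightarrow$ $x$ is a bottleneck $\Rightarrow$ contradiction with biconnectivity'' built on that equality collapses. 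A second, structural warning sign: your two swaps treat $u$ and $v$ symmetrically and independently, so if either one worked on its own it would prove $deg_H(x)\geq 3$ unconditionally, never using that $v$ buys $\{v,y\}$. That is strictly stronger than the lemma (whose conclusion is a disjunction) and is false --- a long cycle in equilibrium has only degree-$2$ vertices. Any correct proof must play the two ownership hypotheses off against each other, e.g.\ by choosing deviations for $u$ and for $v$ whose ``INCREASE'' terms are controlled by the other player's ``SAVINGS'' terms; your sketch never does this. You have identified the right raw material (swap moves, the sets $S(\cdot)$, biconnectivity) but not the inequality that actually closes the argument.
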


%We will use this lemma to prove the following lemma.

\begin{lemma} 
  \label{lem:5-neighborhood}
  If $H$ is a biconnected component of $G$, then for any vertex $u$, its
  neighborhood $N_5(u)$ in $H$ contains a vertex $v$ with $deg_H(v)\geq
  3$.
\end{lemma}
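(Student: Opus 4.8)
The plan is to argue by contradiction: suppose every vertex of $N_5(u)$ has degree exactly $2$ in $H$. (Negating the conclusion gives degree at most $2$, and biconnectivity of $H$ gives degree at least $2$.)

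First I would read off the local structure. Starting at $u$ and following the walk through degree-$2$ vertices in its two directions, $N_5(u)$ is either a cycle --- in which case $H$ is itself that cycle, since nothing of $H$ can attach to a degree-$2$ vertex whose two edges are already used, a degenerate case I would treat separately --- or it is an induced path $P\colon z_0, z_1, \dots, z_{10}$ with $z_5 = u$ and $\deg_H(z_i) = 2$ for all $i$. Assuming $H$ is not a cycle, a short argument (any detour in $H$ between two of the $z_i$ must leave $\{z_0,\dots,z_{10}\}$ through $z_0$ or $z_{10}$ and come back) shows that for $z_a, z_b \in P$ with $3 \le b - a \le 6$ the sub-path of $P$ between them is a shortest $z_a$--$z_b$ path in $H$.

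The engine is Lemma~\ref{lem:no_two_opposite_edges}. If for some such $a < b$ the vertex $z_a$ bought the edge $\{z_a,z_{a+1}\}$ and $z_b$ bought the edge $\{z_{b-1},z_b\}$, then applying the lemma to the pair $z_a,z_b$ forces $\deg_H(z_{a+1}) \ge 3$ or $\deg_H(z_{b-1}) \ge 3$, contradicting that all of $P$ has degree $2$; so this never happens. Recording, for each edge $\{z_{i-1},z_i\}$, whether it was bought by $z_{i-1}$ (``right'') or by $z_i$ (``left''), this forbids a right-edge followed, two to five positions later, by a left-edge. A short case analysis then shows that either all edges of $P$ point the same way, or some vertex of $P$ bought both of its incident path-edges, or some vertex of $P$ bought neither of them. (In fact one gets such a vertex wherever the orientation switches direction, so only the two monotone patterns avoid it.)

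The remaining --- and hardest --- step is to eliminate these degenerate patterns, which Lemma~\ref{lem:no_two_opposite_edges} cannot reach: along a monotone stretch every vertex buys in the same direction, so two vertices never buy the edges towards each other and the lemma never triggers. Here I would use the Nash property head-on. In each pattern there is a distinguished player $z^\ast$ --- a vertex of $P$ buying both or neither of its path-edges, or, in the monotone case, the endpoint of the maximal degree-$2$ path on the side towards which most of the component lies --- that has a profitable deviation: delete one of its incident path-edges and buy instead an edge jumping about five vertices ahead along $P$ towards the larger side. A direct computation shows that $z^\ast$'s distance changes to the roughly six skipped vertices of $P$ cancel, while every vertex reached through the skipped segment becomes strictly closer, so $z^\ast$ strictly improves --- contradicting equilibrium. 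Combining the two parts yields the lemma. I expect this last step to be the main obstacle: it needs the right choice of deviating player and direction, an honest accounting of the distance changes (including the trees outside $H$ hanging off the vertices of $P$), and separate care both for the purely monotone patterns and for the case where $H$ is a single cycle.
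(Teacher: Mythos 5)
Your setup is sound and runs parallel to the paper's: assume all of $N_5(u)$ has degree $2$, observe it is two arms of a path (the cycle case needs Proposition~\ref{prop:one_cycle}, which you only gesture at), and use Lemma~\ref{lem:no_two_opposite_edges} to constrain the edge orientations. Your verification that sub-paths of length $3$ to $6$ are shortest paths is correct, and the resulting reduction to a few orientation patterns matches the paper's two cases (all five edges of one arm directed away from $u$, or the two neighbours of $u$ buying towards $u$ and the rest directed away). The gap is exactly where you suspect it: the final deviation argument. The swap deviations you propose for a vertex \emph{on} the path (delete one incident path-edge, buy an edge a few vertices further along) are precisely the deviations that generate the chain of inequalities $s_0\geq s_1+s_2+s_3+s_4$, $s_1\geq s_2+s_3+s_4$, \dots, $s_4\geq k$ used in the paper; this system is satisfiable, so no contradiction can come from these moves alone. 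Your claim that ``distance changes to the skipped vertices cancel'' is false once the trees $S(z_i)$ hanging off the path are weighted in: the chain shows these weights decay geometrically away from $u$, so a jump towards the heavy side trades a decrease of $1$ to few vertices against an increase of $1$ to many, and the inequality goes the wrong way (or is tight), not strictly your way.

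What is actually needed, and what the paper supplies, is a deviating player \emph{beyond} the degree-$2$ stretch: let vertex $5$ be the first vertex of degree $\geq 3$ past the arm, with $k$ descendants in the BFS tree rooted at vertex $3$. The chain gives $s_0\geq 8k$. Then one splits on who buys the edges at vertex $5$: if some child $6$ buys its edge to $5$, player $6$ swaps it for an edge to $u$, saving $4s_0+2s_1$ against an increase of at most $6k+4s_4+2s_3$, which the chain shows is a strict improvement; if $5$ buys all its child-edges, it deletes them all and buys one edge to $u$, saving at least $\alpha$ against an increase of at most $2\,diam(H)\,k$, and combining this with $\alpha\geq(rad(H)-1)s_0\geq 8(rad(H)-1)k$ and $diam(H)\leq 2\,rad(H)$ yields the contradiction. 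Neither the case split at the first degree-$3$ vertex nor the radius/diameter inequality appears in your sketch, and without them (or a substitute of comparable strength) the argument does not close.
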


\begin{proof} 
  Assume that this is not true. Then the 5-neighborhood $N_5(u)$ of vertex $u$
  is formed by two disjoint paths. (The case that the 5-neighborhood forms a
  cycle is excluded by Proposition~\ref{prop:one_cycle} stating that no Nash
  equilibrium for $\alpha>n$ contains exactly one cycle). 
  We consider two cases. First, we will assume that at least one of the two
  paths starting at $u$ is directed away from $u$ (see
  Fig.~\ref{fig:5-Neighborhood}(a)). In the second case, in each
  of the two paths, there has to be a vertex which buys an edge towards $u$.
  It follows from Lemma~\ref{lem:no_two_opposite_edges} that these two vertices
  are the two neighbors of $u$ in $N_5(u)$ (see
  Fig.~\ref{fig:5-Neighborhood}(b)).

  \begin{figure}[t]
    \centering
    \includegraphics[width=0.7\textwidth]{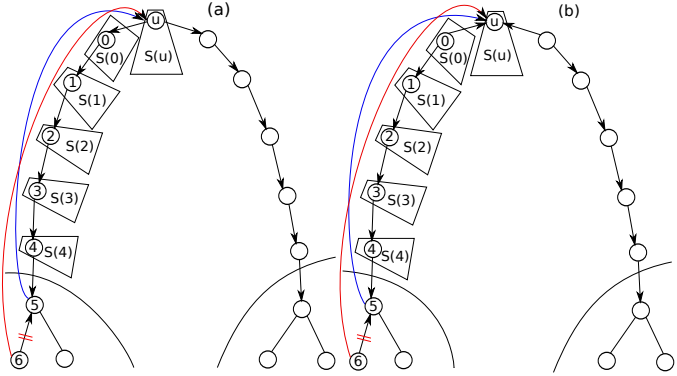}
    \caption{The 5-Neighborhood $N_5(u)$ of vertex $u$.}
    \label{fig:5-Neighborhood}
  \end{figure}

  In the first case, there is a sequence of five edges directed away from $u$,
  with the naming like in Fig.~\ref{fig:5-Neighborhood}(a)).
  Let $s_u:=|S(u)|$, $s_i=|S(i)|$ for $0\leq i\leq 4$.

  Then,
  \begin{equation}
    \label{eq:chain_of_directed_edges}
    s_0\geq s_1+s_2+s_3+s_4, s_1 \geq s_2+s_3+s_4, s_2\geq s_3+s_4, s_3 \geq
    s_4, s_4\geq k\text{,}
  \end{equation}
  where $k$ is the number of vertices which are descendants of vertex $5$ in the
  breadth-first-search (BFS) tree rooted at vertex $3$. We can obtain these
  inequalities by considering the following strategy changes of the players $u$
  and $i$, $0\leq i\leq 3$: delete the edge directed away from $u$, and buy a
  new edge to the next vertex in the sequence; now simply apply the ``SAVINGS''
  $\leq$ ``INCREASE'' principle.

  We first assume that vertex $5$, the neighbor of vertex $4$ in $H$, has degree
  at least 3 in $H$ (i.e., it has at least two children in the BFS tree rooted
  at vertex 3). The case when the degree-3 vertex appears later along the
  path is easier and will be discussed later.
  We now distinguish two cases. First, we assume that one of the children of
  vertex $5$ in the considered BFS tree buys an edge to vertex $5$. Let us call
  it vertex $6$. The other case is when vertex $5$ buys all the edges to its
  children.

  Consider the following strategy change: vertex $6$ deletes an edge towards
  vertex 5 and buys new edge towards vertex $u$. This decreases its distance
  cost at least to vertices in $S(0)$ by 4, and to vertices in $S(1)$ by 2,
  whilst increases distances to vertices in the set of descendants of $5$ in the
  BFS tree rooted at $3$ by at most $6$, to the vertices in $S(4)$ by $4$ and to
  the vertices in $S(3)$ by two. By this strategy change distance from vertex 6 to
  any other vertex is not increased, because vertex $u$ is located deeper than
  vertex $6$ in the BFS tree rooted at vertex $3$.
  But then according to the chain of inequalities
  (\ref{eq:chain_of_directed_edges}) we get $4s_0+2s_1 > 6k+4s_4+2s_3$, and thus
  the player $6$ can improve, a contradiction.

  In the case where vertex 5 buys all edges towards its children, consider the
  following strategy change of vertex 5: delete all the edges to
  its children (in the considered BFS tree) and buy one edge to vertex $u$. By
  this, the ``SAVINGS'' are at least $\alpha$.
  Furthermore, since $H$ is biconnected, the graph remains connected. Distances
  from vertex $5$ are increased only to vertices in the set $K$ -- the set of
  the vertices which are descendants of vertex 5 in the BFS tree rooted at
  vertex 3. This ``INCREASE'' is at most $2\cdot diam(H)$, where $diam(H)$ is
  the diameter of $H$. 
  By the ``SAVINGS'' $\leq$ ``INCREASE'' principle, we get that $\alpha \leq
  2\cdot diam(h) k$.
  At the same time, $\alpha \geq (rad(H)-1) s_0$, where $rad(H)$ is the radius
  of $H$, as otherwise a vertex at distance $rad(H)$ from vertex $0$ could buy
  an edge towards vertex $0$ and decrease its cost.
  Combining these two inequalities with the inequality $s_0\geq 8k$, which is
  obtained from (\ref{eq:chain_of_directed_edges}), we get that
  $8(rad(H)-1)k\leq 2\cdot diam(H)k\leq 4\cdot rad(H)k$, which is a
  contradiction.

  The second case depicted in Fig.~\ref{fig:5-Neighborhood}(b) is analyzed in
  the very same way, the only change is that now the heaviest component is
  $S(u)$. The chain of inequalities is similar to
  (\ref{eq:chain_of_directed_edges}):
  \begin{equation}
    s_u\geq s_0+s_1+s_2+s_3, s_1\geq s_2+s_3+s_4, s_2\geq s_3+s_4, s_3\geq
    s_4, s_4\geq k \text{,}
  \end{equation}
  where the notation is the same as in the first case. We obtain that $s_u\geq
  7k$, and subsequently, arguing about the vertex at distance $rad(H)$ from $u$,
  the contradiction $7(rad(H)-1)k\leq 2\cdot diam(H)k\leq 4 rad(H)k$.

  Finally, if there is a longer sequence of vertices with degree $2$ than the
  considered sequence of length 5 of edges directed away from $u$, then we can
  only consider the last 5 edges (all directed away from $u$) and apply the very
  same reasoning.

\end{proof}

We can strengthen the result if we consider stronger version of a Nash
equilibrium in which no coalition of two players can change their strategies and
improve their overall cost. 

We call such an equilibrium a \emph{$2$-coalitional Nash equilibrium}. 

\begin{lemma} 
  The 3-neighborhood $N_3(u)$ of any vertex $u$ of a biconnected component $H$ 
  of a 2-coalitional Nash equilibrium has a vertex of degree at least 3.
\end{lemma}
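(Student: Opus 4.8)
The plan is to run the proof of Lemma~\ref{lem:5-neighborhood} again, but with the single-player deviations used there replaced by coalitional deviations of two players: a pair of adjacent players acting together can effectively ``reach one more step'', and this is precisely the slack that shrinks the controlled neighbourhood from $N_5(u)$ to $N_3(u)$. Assume, for contradiction, that no vertex of $N_3(u)$ has degree $\ge 3$ in $H$. Since $H$ is biconnected, every vertex of $H$ has degree $\ge2$, so the subgraph induced on $N_3(u)$ has maximum degree $2$ and, being a ball around $u$, is connected; hence it is a single path or a single cycle, and the cycle case is excluded by Proposition~\ref{prop:one_cycle} exactly as in the proof of Lemma~\ref{lem:5-neighborhood}. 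So $N_3(u)$ is a path through $u$; writing $p_{-3},\dots,p_{-1},u,p_1,\dots,p_3$ for the vertices of $H$ at distance $\le3$ from $u$, each of $p_{-3},\dots,p_3$ has degree exactly $2$ in $H$, and the path continues in $H$ to vertices $p_{\pm4}$ of uncontrolled degree (if an arm is too short for $p_{\pm3}$ to exist, $H$ is so small that $N_3(u)=H$ and the statement is immediate).

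As in Lemma~\ref{lem:5-neighborhood} I would split into two cases according to the orientations of the path edges inside $N_3(u)$, using Lemma~\ref{lem:no_two_opposite_edges}: either one arm, say $u,p_1,p_2,p_3$, is oriented entirely away from $u$ (a directed sub-path of length $3$, possibly $4$), or each arm contains an edge bought toward $u$ — and then Lemma~\ref{lem:no_two_opposite_edges} forces the two such buyers to be $p_1$ and $p_{-1}$ (any other pair would lie at distance $\ge3$ with only degree-$2$ vertices between them), after which, again by Lemma~\ref{lem:no_two_opposite_edges}, all remaining arm edges point away from $u$. In either case the single-player moves of Lemma~\ref{lem:5-neighborhood} still apply on the degree-$2$ arm — player $p_i$ deletes the edge it bought away from $u$ and re-buys an edge to the next vertex — and, with $s(w):=|S(w)|$, give the usual chain $s(p_1)\ge s(p_2)+s(p_3)+\cdots$, $s(p_2)\ge s(p_3)+\cdots$, ending in $s(p_3)\ge k$, where $k$ is the number of descendants of $p_3$ in an appropriate BFS tree. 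Over an arm of length only $3$ this yields merely $s(p_1)\ge c\,k$ for a small constant $c$, far short of the factor $8$ the length-$5$ chain gave in Lemma~\ref{lem:5-neighborhood}.

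The $2$-coalition enters only to supply the strength that the longer chain provided in Lemma~\ref{lem:5-neighborhood}. I would let two consecutive vertices near the deep end of the controlled arm — $p_2$ and $p_3$, or $p_1$ and $p_2$, whichever the ownership pattern (constrained by Lemma~\ref{lem:no_two_opposite_edges}) permits — deviate jointly: the deeper one deletes its ``downward'' edge, which leads only into the small subtree of total weight $\le k$, while the shallower one re-buys a single edge directly to $u$ (in case (b), toward the heavy opposite side). The coalition then pays one edge less, saving $\alpha$, \emph{and} strictly shortens many distances from the shallow member (to all of $S(u)$ and the opposite arm), whereas the only distances that grow are those into the small subtree, each by at most $2\cdot diam(H)$; biconnectedness of $H$ keeps the graph connected throughout. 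This yields an inequality of the form $\alpha + (\text{positive savings}) \le 2\cdot diam(H)\cdot k$. Combined with the radius bound $\alpha\ge(rad(H)-1)\,s(p_1)$ (otherwise a vertex at distance $rad(H)$ from $p_1$ would buy an edge to $p_1$), the chain inequality $s(p_1)\ge c\,k$, and $diam(H)\le 2\,rad(H)$, this is the same endgame as in Lemma~\ref{lem:5-neighborhood}: the computation $8(rad(H)-1)k\le 2\cdot diam(H)\cdot k\le 4\,rad(H)\,k$ there, with the factor $8$ now replaced by whatever the length-$3$ arm together with the coalitional save delivers, which must still exceed $4$. Case (b) runs identically with $S(u)$ as the heavy component.

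The work is entirely in the constant-chasing, in two places. First, over a length-$3$ arm the path edges admit several genuinely different orientation patterns, and for each I must exhibit \emph{some} admissible pair-deviation of the above shape; Lemma~\ref{lem:no_two_opposite_edges}, applied to the pairs of degree-$2$ path vertices at distance $\ge3$, is what excludes the awkward alternating patterns. Second — and this is the crux — I must show the two members' cost changes really sum to something strictly negative: the choice of which member deletes its owned edge and which buys the new edge must be made so that the saved $\alpha$ survives, so that the only distances that grow lie inside the provably small deep subtree, and so that the new shortcut genuinely reaches the provably heavy shallow side. It is in getting all three simultaneously that the coalition effectively performs two of the single moves of Lemma~\ref{lem:5-neighborhood} in one step, which is exactly the source of the improvement from $N_5(u)$ to $N_3(u)$, and exactly where the proof could fail if the numbers do not close the gap.
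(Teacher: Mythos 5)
There is a genuine gap, and it sits exactly where you flag it yourself: the coalitional deviation you describe is not valid, and the quantitative endgame you defer to is the wrong one. Concretely, on a degree-2 arm $u,p_1,p_2,p_3,p_4$ with edges directed away from $u$, each path vertex owns only the single edge pointing away from $u$. If the deeper coalition member (say $p_3$) deletes its downward edge $(p_3,p_4)$ while the shallower member $p_2$ deletes $(p_2,p_3)$ and re-buys an edge to $u$ (the only reading under which the coalition ``pays one edge less''), then $p_3$ loses both of its edges inside $H$ and the component $\{p_3\}\cup S(p_3)$ is disconnected from the rest of the graph, giving a coalition member infinite cost. If instead the shallower member only adds an edge without deleting one, no $\alpha$ is saved and the inequality $\alpha+(\text{savings})\le 2\,diam(H)\,k$ you want has no $\alpha$ on the left. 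Moreover, even granting a saved $\alpha$, the endgame of Lemma~\ref{lem:5-neighborhood} needs the chain to deliver a factor strictly larger than $4$ in $s(p_1)\ge c\,k$; a length-3 arm gives only $c=2$, and the extra distance increases caused by the second deleted edge (now also $p_2$'s distances to $S(p_3)$ blow up by up to $2\,diam(H)$) push the bound further in the wrong direction. Your own closing sentence concedes that the constants have not been checked; they do not close.

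The paper's proof avoids all of this by using a \emph{creation-cost-neutral} double swap and never touching the radius/diameter machinery. With $0,1,2$ the arm vertices at distances $1,2,3$ from $u$ and $3$ the next vertex, the coalition $\{0,2\}$ deviates by $0$ replacing its edge $(0,1)$ with $(0,3)$ and $2$ replacing its edge $(2,3)$ with $(2,0)$. The number of bought edges is unchanged, so the comparison is purely between usage costs: the coalition gains $s_u+s_0+s_2$ (vertex $2$ gets one step closer to $S(u)$ and $S(0)$, vertex $0$ one step closer to $S(2)$) and loses at most $s_1+k$, where $k$ counts the descendants of vertex $3$ in the BFS tree rooted at $1$ (the only other vertices that can move farther are those equidistant from $0$ and $2$). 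The single-player chain $s_0\ge s_1+s_2$, $s_1\ge s_2$, $s_2\ge k$ (or $s_u\ge s_0+s_1+s_2$ in the case where both arm-adjacent edges point toward $u$) then gives $s_u+s_0+s_2>s_1+k$ directly — a strict improvement for the coalition with no constant-chasing at all. If you want to salvage your route, you would need a deviation in which the deleted and bought edges leave every coalition member connected and in which the saved $\alpha$ genuinely survives; the paper's swap shows that the cleaner move is to not try to save $\alpha$ in the first place.
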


\begin{figure}[t]
  \centering
  \includegraphics[width=0.49\textwidth]{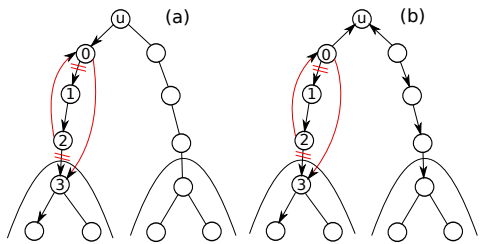}
  \caption{The 3-Neighborhood $N_3 (u)$ of vertex $u$.}
  \label{fig:3-Neighborhood}
\end{figure}

\begin{proof} 
  Assume the converse. Similarly to the proof of Lemma~\ref{lem:5-neighborhood},
  there are two different cases of how the neighborhood of vertex $u$ looks like
  (see Fig.~\ref{fig:3-Neighborhood}(a) and (b); notation is also the same as in
  Lemma~\ref{lem:5-neighborhood}).
  In both cases consider the coalition of players $0$ and $2$. Consider the
  following strategy changes: player $0$ deletes edge $(0,1)$ and instead buys
  edge $(0,3)$, whilst player $2$ deletes edge $(2,3)$ and buys edge $(2,0)$.
  This strategy change does not change the player coalition's creation cost (in
  terms of $\alpha$). Among the vertices $S(0), S(1), S(2)$ and $S(u)$ this
  strategy change decreases coalition's usage cost by $s_u+s_0+s_2$ and
  increases by $s_{1}$.
  Other vertices are partitioned by their shortest distances to vertices $0$ and
  $2$, lets assume that for any vertex $v$ which does not belong to
  $S(0),S(1),S(2)$ or $S(u)$ shortest distance to vertex $0$ is $x$ and shortest
  distance to vertex $2$ is $y$. Obviously $|x-y|\leq 2$. If $|x-y|>0$ then
  there is no increase in the usage cost of coalition towards vertex $v$ by this
  strategy change. The only possibility of increase is when $x=y$, but in that
  case $v$ is the descendant of vertex $3$ in the BFS tree rooted at vertex $1$.
  Similarly to Lemma~\ref{lem:5-neighborhood}, we denote $k$ to be the number of
  vertices which are descendants of vertex $3$ in the BFS tree rooted at vertex
  $1$. Analogously to the proof of Lemma~\ref{lem:5-neighborhood}, the following
  inequalities hold for the case depicted in Fig. \ref{fig:3-Neighborhood}(a):
  $s_0\geq s_1+s_2 , s_1\geq s_2 \mbox{ and } s_2\geq k\text{,}$ whilst for the
  case depicted in Fig.~\ref{fig:3-Neighborhood}(b), we have $s_u\geq
  s_0+s_1+s_2, s_1\geq s_2, s_2\geq k$.
  In both cases $s_u+s_0+s_2>s_1+k$, which results in a contradiction.

\end{proof}

The following two lemmas are crucial for proving the main result of the paper.
The first lemma has been proven in \cite{Mihalak+Schlegel/2013}. The second
lemma strengthens a similar lemma from \cite{Mihalak+Schlegel/2013}. Its proof
uses the result of Theorem~\ref{thm:no_short_cycle}. 

\begin{lemma}[\cite{Mihalak+Schlegel/2013}]
  If the $t$-neighborhood of every vertex of a biconnected component $H$ of a
  Nash equilibrium contains a vertex of degree at least $3$, then the average
  degree of $H$ is at least $2+\frac{1}{3t+1}$.
\end{lemma}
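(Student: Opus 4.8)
The plan is to use a discharging / double-counting argument. Let $H$ have vertex set $V(H)$ and edge set $E(H)$; write $n_H=|V(H)|$ and $m_H=|E(H)|$. The average degree of $H$ is $2m_H/n_H$, so the claim is equivalent to $2m_H \geq (2+\frac{1}{3t+1})\,n_H$, i.e. $2(m_H - n_H) \geq \frac{n_H}{3t+1}$. Note that since $H$ is biconnected and not a single cycle (by Proposition~\ref{prop:one_cycle}, using $\alpha > n$), we have $m_H \geq n_H + 1$, but that alone only gives a constant surplus, not one linear in $n_H$; we must exploit the hypothesis that every vertex has a degree-$\geq 3$ vertex within distance $t$.

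First I would let $W \subseteq V(H)$ be the set of vertices of degree at least $3$ in $H$. Each $w\in W$ contributes $\deg_H(w) - 2 \geq 1$ to the ``excess'' $\sum_{v}(\deg_H(v)-2) = 2m_H - 2n_H$ (here I am using that no vertex of a biconnected component on $\geq 3$ vertices has degree $1$, so every non-$W$ vertex contributes exactly $0$). Hence $2(m_H-n_H) = \sum_{w\in W}(\deg_H(w)-2) \geq |W|$. So it suffices to show $|W| \geq \frac{n_H}{3t+1}$, equivalently $n_H \leq (3t+1)\,|W|$.

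To prove $n_H \le (3t+1)|W|$, I would assign every vertex $v\in V(H)$ to a nearby high-degree vertex: by hypothesis pick $\phi(v)\in W$ with $d_H(v,\phi(v))\leq t$. The goal is then to bound the number of vertices mapped to any fixed $w\in W$. A vertex $v$ with $\phi(v)=w$ lies in the ball $N_t(w)$, but that ball can be huge, so a crude ball-size bound fails; the real point is that the part of $H$ ``hanging off'' $w$ within distance $t$ is tree-like except near other vertices of $W$. Concretely, consider $H' = H$ with the vertices of $W\setminus\{w\}$ deleted (or: cut $H$ at every other high-degree vertex). The component of $w$ in $H'$ restricted to distance $\le t$ consists of at most $\deg_H(w)$ paths emanating from $w$, each of length $\le t$, so it contains at most $1 + \deg_H(w)\cdot t$ vertices — but I want a bound independent of $\deg_H(w)$, so instead I would charge each such path's vertices partly to $w$ and partly to the next high-degree vertex along it. The clean statement to aim for: partition $V(H)$ into the set $W$ and, for each ``branch'' (maximal path of degree-$2$ vertices between two consecutive elements of $W$, or a path from $W$ into a pendant subtree $S(\cdot)$), observe that every degree-$2$ vertex is within distance $t$ of some element of $W$ only if its branch has length $\le 2t$ (distance $\le t$ to each end) — wait, pendant subtrees cause trouble, so here I would invoke that $H$ is biconnected: there are no pendant subtrees inside $H$, every maximal degree-$2$ path runs between two (not necessarily distinct) vertices of $W$. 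A maximal degree-$2$ path of length $\ell$ has its midpoint at distance $\ge \ell/2$ from both ends, so for every interior vertex to be within $t$ of $W$ we need $\ell \le 2t$; hence such a path has at most $2t-1$ interior degree-$2$ vertices. The number of such maximal paths is at most $\frac{1}{2}\sum_{w\in W}\deg_H(w)$, and $\sum_{w\in W}\deg_H(w) = 2m_H - \sum_{v\notin W}\deg_H(v) = 2m_H - 2(n_H-|W|)$. Putting the pieces together — $n_H \le |W| + (\text{number of maximal degree-2 paths})\cdot(2t-1)$ — and combining with $2(m_H-n_H)\ge |W|$ yields, after routine algebra, the bound $2m_H/n_H \ge 2 + \frac{1}{3t+1}$; the constant $3t+1$ comes out of balancing the path-length bound $2t-1$ against the extra contribution of the $W$-vertices themselves.

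The main obstacle I anticipate is handling the edges and vertices of $H$ that lie \emph{on} the maximal degree-$2$ paths correctly in the count — in particular making sure a path between two distinct elements of $W$ is charged once, a path from $w$ to itself (a ``loop'' branch) is handled, and the arithmetic that converts ``each maximal degree-$2$ path has $\le 2t-1$ interior vertices and there are $\le \frac12\sum_W\deg_H$ of them'' into exactly the constant $\frac{1}{3t+1}$ is done without slack. I would double-check the extremal configuration (a long cycle with a few chords, or a ``theta-like'' graph) to confirm the constant is tight, since the paper states this lemma is used with a specific $t$ and the exact constant matters downstream.
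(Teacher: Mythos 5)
The paper does not actually prove this lemma --- it is imported verbatim from \cite{Mihalak+Schlegel/2013} --- so there is no in-paper proof to compare against; I can only assess your argument on its own. Your argument is essentially the standard (and, I believe, the cited) counting proof, and it does work: with $W$ the set of degree-$\geq 3$ vertices, biconnectivity gives minimum degree $2$, so $2(m_H-n_H)=\sum_{w\in W}(\deg_H(w)-2)\geq |W|$; the contracted multigraph on $W$ has exactly $\frac12\sum_{w\in W}\deg_H(w)=m_H-n_H+|W|$ maximal degree-$2$ paths; and bounding the interior of each path gives $n_H\leq |W|+(\text{max interior count})\cdot(m_H-n_H+|W|)$, which combined with $|W|\leq 2(m_H-n_H)$ yields the claim. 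Your one genuine slip is the off-by-one in the path-length bound: an interior vertex at position $i$ on a path of length $\ell$ is at distance $\min(i,\ell-i)$ from $W$, and this is at most $t$ for \emph{all} interior vertices iff $\lfloor \ell/2\rfloor\leq t$, i.e.\ $\ell\leq 2t+1$, so a maximal degree-$2$ path has up to $2t$ interior vertices, not $2t-1$. This correction is not cosmetic --- it is exactly what makes the constant come out as $3t+1$: with $2t$ interior vertices per path one gets $n_H\leq (2t+1)|W|+2t(m_H-n_H)\leq (6t+2)(m_H-n_H)$, hence average degree at least $2+\frac{1}{3t+1}$, tight for a cubic multigraph with every edge subdivided $2t$ times (your $2t-1$ would ``prove'' a strictly stronger bound, which should have flagged the error). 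Your other worries (self-loop branches, pendant subtrees) are indeed non-issues by biconnectivity, and the case $W=\emptyset$ is excluded because then $H$ is a cycle and the hypothesis fails.
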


\begin{lemma} 
  If $\alpha > n$, then the average degree of a biconnected component $H$ of a
  Nash equilibrium graph is at most $2+\frac{4n}{\alpha-n}$.
\end{lemma}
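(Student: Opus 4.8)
The plan is to count edges in $H$ by a charging argument that bounds the "excess" edges (those beyond a spanning tree of $H$) using strategy changes of the players who bought them. Recall that a biconnected component $H$ with $m$ edges and $h$ vertices has average degree $2m/h$, so it suffices to show $m \le h + \frac{2n}{\alpha-n}h \cdot \frac{1}{2}$-type bound; more precisely, writing the average degree as $2 + \frac{2(m-h+1)}{h}$ (if $H$ is, say, treated via its cycle-rank), it suffices to bound the cyclomatic number $m - h + 1$ by roughly $\frac{2n}{\alpha-n}\cdot\frac{h}{2}$.

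First I would pick a spanning tree $T$ of $H$ and look at the $m - h + 1$ non-tree edges. For each such edge $e = \{u,v\}$, bought by one of its endpoints — say $u$ — consider the strategy change in which $u$ deletes $e$. Since $H$ is biconnected, $G$ stays connected, so the "SAVINGS" are exactly $\alpha$. The "INCREASE" is the total growth of distances from $u$: only vertices $w$ whose every shortest $u$–$w$ path used $e$ are affected, and for each such $w$ the increase is finite. The Nash inequality then gives $\alpha \le \sum_{w} \delta_w$, where the sum is over the affected vertices. The key quantitative input is a bound of the form $\delta_w \le (\text{small constant}) $ together with a bound on how many vertices can be affected — here is where Theorem~\ref{thm:no_short_cycle} enters: the shortest cycle has length $c \ge \frac{2\alpha}{n} + 2$, which forces any cycle through $e$ (in particular the shortest one) to be long, so deleting $e$ reroutes $u$ only by a long detour, and crucially a vertex $w$ is affected only if it is "close" to the $e$-side of this long cycle. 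Summing the $m-h+1$ Nash inequalities, each affected vertex $w$ is charged at most a bounded number of times (because the long girth keeps the relevant cycles from overlapping too much around $w$), giving $(m - h + 1)\,\alpha \le (\text{const})\cdot n \cdot (\text{something like } h)$, and rearranging yields the stated bound $2m/h \le 2 + \frac{4n}{\alpha - n}$.

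The main obstacle I anticipate is the double-counting control: a single vertex $w$ may be "captured" by several non-tree edges simultaneously, and I need the hypothesis $\alpha > n$ (via the long girth from Theorem~\ref{thm:no_short_cycle}) to argue that the detour lengths are large enough that the sets of affected vertices, or at least the total charge they receive, is controlled linearly in $n$ rather than in $n$ times the cyclomatic number. A clean way to handle this is to charge, for each non-tree edge $e$, the increase only against vertices in $S(u)$ or in a bounded neighborhood, and to use a "radius-versus-diameter" comparison of $H$ exactly as in the proof of Lemma~\ref{lem:5-neighborhood} to convert a global distance-sum bound into the desired inequality. The arithmetic of extracting precisely $\frac{4n}{\alpha-n}$ (rather than some other constant over $\alpha - n$) will require carefully tracking that the per-edge increase, after accounting for the girth lower bound, is at most $2(n-1)$ worth of savings offset by an $\alpha$-term on the other side, so that $\alpha - n$ appears naturally in the denominator; this is routine once the counting framework is set up, so the conceptual weight is entirely in the charging/girth step.
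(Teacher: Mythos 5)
There is a genuine gap, and it sits exactly where you place your ``main obstacle'': the double-counting step is not routine, and the naive version of your plan fails quantitatively. When a player $u$ deletes a non-tree edge $e$, the Nash inequality reads $\alpha \le \sum_w \delta_{e,w}$, where $\delta_{e,w}$ is bounded by the length of the shortest cycle through $e$ minus $2$ --- and Theorem~\ref{thm:no_short_cycle} makes that detour \emph{long} (at least $\frac{2\alpha}{n}$), not short. So the per-vertex increase is not a ``small constant''; it can be of order $\alpha/n$ or even the diameter. Summing the inequalities over all $m-h+1$ non-tree edges gives $(m-h+1)\alpha \le \sum_e \sum_w \delta_{e,w}$, and you have no upper bound on the right-hand side: a single vertex $w$ can be charged by many non-tree edges, each with a large $\delta_{e,w}$, and the girth bound gives no mechanism to prevent this. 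Nothing in the proposal produces the needed bound $O(n\cdot|V(H)|)$ on the total charge, and without it the argument does not close.

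The paper's proof avoids summing deletion inequalities altogether. It roots a BFS tree $T$ at the vertex $v$ of \emph{minimum usage cost}, sets $T'=T\cap H$, and calls the buyers of edges in $E(H)\setminus E(T')$ shopping vertices. Two facts then bound $|E(H)\setminus E(T')|$ directly: (i) no shopping vertex buys two non-tree edges (it would rather delete both and buy an edge to $v$, saving $\alpha$ against an increase of at most $n$); and (ii) any two shopping vertices are at $T'$-distance at least $\frac{\alpha-n}{n}$, which by a packing argument bounds their number by $\frac{2n|V(T')|}{\alpha-n}$. The proof of (ii) is the heart of the lemma and uses a strategy change you do not consider: if $u_1,u_2$ are too close, the girth bound from Theorem~\ref{thm:no_short_cycle} forces $v_1,x_1,\dots,x_k,v_2$ to be a genuine path, so some interior vertex $x_i$ buys both of its adjacent edges; deleting both and buying an edge to $v$ saves $\alpha$ while the usage cost grows by less than $n+(\alpha-n)$, using that every child-subtree of the minimum-usage-cost root has at most $n/2$ vertices. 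This is where the $\alpha-n$ in the denominator actually comes from; your sketch never identifies the root $v$, the two-edge deletion, or the $n/2$ subtree bound, and the charging framework you defer to is the missing idea rather than a technicality.
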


\begin{proof} 
  Among all vertices of the equilibrium graph $G$, consider a vertex with the
  smallest usage cost and let this vertex be $v$. Consider a BFS tree $T$ rooted
  in $v$. Let $T' = T\cap H$. Then the average degree of $H$ is
  $deg(H)=\frac{2|E(T')|+2|E(H)\setminus E(T')|}{|V(T')|} \leq
  2+\frac{2|E(H)\setminus E(T')|}{|V(T')|}$. 
  We now bound $|E(H)\setminus E(T')|$. We consider vertices that buy an edge in
  $E(H)\setminus E(T')$ and call them \emph{shopping vertices}. It is easy to
  see that no shopping vertex buys more than 1 edge, because if any of them buys
  two or more edges, it is better for it to delete all of the edges and buy 1 new edge
  towards $v$: this decreases its creation cost by at least $\alpha$, whilst
  increases its usage cost by at most $n$. 
  It is thus enough to bound the number of shopping vertices. For this, we prove
  that the distance in the tree $T'$ between any two shopping vertices is lower
  bounded by $\frac{\alpha-n}{n}$, which then implies that there can not be too
  many shopping vertices. Namely, the number of shopping vertices is at most
  $\frac{2nV(T')}{\alpha-n}$.  
  Assigning every node from $H$ to the closest shopping vertex according to the
  distance in $T'$ forms a partition of $H$, where every part contains exactly
  one shopping vertex. 
  As the distance in $T'$ between shopping vertices is at least
  $\frac{\alpha-n}{n}$, the size of every part is at least
  $\frac{\alpha-n}{2n}$.

  We assume for contradiction that there is a pair of shopping vertices $u_1$
  and $u_2$ such that $d_{T'}(u_1,u_2)<\frac{\alpha-n}{n}$. Let $u_1=x_1,\cdots,
  x_k=u_2$ be the unique path from $u_1$ to $u_2$ in $T'$, and $(u_1,v_1)$ and
  $(u_2,v_2)$ be the edges bought by $u_1$ and $u_2$ in $E(H)\setminus E(T')$. 
  Observe first that vertices $v_1$ and $v_2$ are not descendants of any vertex
  $x_i$, otherwise paths $v_j-x_i$ and $x_i-u_j$ together with an edge
  $(u_j,v_j)$ form a cycle of length at most $2(d_{T'}(u_1,u_2)+1) <
  \frac{2\alpha}{n}+2$ which contradicts Theorem~\ref{thm:no_short_cycle}. 
  Thus, $x_0 := v_1, x_1, \ldots, x_k, x_{k+1}:=v_2$ is a path. Since $x_1$ buys
  edge $(x_0,x_1)$ and $x_k$ buys edge $(x_k,x_{k+1})$, there is a vertex
  $x_{i}$ such that $x_i$ buys both of its adjacent edges $(x_{i-1},x_i)$ and
  $(x_{i},x_{i+1})$. Consider the following strategy change for player $x_i$:
  delete the two adjacent edges and buy a new edge to vertex $v$. In this way
  $x_i$ decreases its creation cost by $\alpha$.

  We now show that $U_\text{new}(x_i)$, the usage cost of $x_i$ in the new graph,
  is less than $U_G(x_i)$, the usage cost in the original graph, plus $\alpha$,
  which gives a contradiction. It is easy to observe that $U_\text{new}(x_i)\leq
  n+U_\text{new}(v)$, since $x_i$ can always go through $v$ in the new strategy
  to any vertex.
  %($U_{new}(x)$ and $U_G(x)$ stand for the usage cost of vertex $x$
  %in the graph after strategy change and the original graph, respectively).
  We now consider $U_\text{new}(v)$. Note that only the vertices in the path
  $u_1-u_2$ and their descendants can increase their distance to $v$ by the
  strategy change of $x_i$. Let $y$ be any such vertex. If the closest ancestor
  of $y$ on the path is $x_i$, then $d_\text{new}(v,y)\leq d_G(v,y)$, so there
  is no increase. 
  We assume, without loss of generality, that the closest ancestor (of $y$)
  $x_j$ has an index less than $i$, i.e., $j<i$. Then the following chain of
  inequalities and equalities hold:
  $d_\text{new}(v,y) \leq d_\text{new}(v,x_0) + d_\text{new}(x_0,x_j) +
  d_\text{new}(x_j,y) = d_G(v,x_0) + d_G(x_0,x_j) + d_G(x_j,y)$ (the inequality
  is a triangle inequality, whilst the equality holds because $x_0$ is not a
  descendant of any vertex on the path in the new graph). Since 
  $d_G(v,y) = d_G(v,x_j)+d_G(x_j,y)$, the difference between new and initial
  distances is $d_{new}(v,y) - d_G(v,y) = d_G(v,x_0) + d_G(x_0,x_j) - d_G(v,x_j)
  \leq 2d_G(x_0,x_j) \leq d_G(u_1,u_2) \leq 2\cdot d_{T'}(u_1,u_2) \leq
  \frac{2(\alpha-n)}{n}$ (where the latter inequality is implied by our
  assumption). We need to bound the number of possible $y$'s. Path $u_1-u_2$
  does not go through vertex $v$, so the number of possible $y$'s is bounded by
  the size of the subtree of $T$ of a child of $v$ that contains this path.
  We prove that the size of any subtree of a child of $v$ in $T'$ is at most
  $\frac{n}{2}$.

  Consider any child $t$ of $v$ in $T$, and consider the subtree of $T$ rooted
  in $t$. Let the $b$ be the number of vertices in the subtree, and let $a$ be
  the number of other vertices of $T$. Let $c_1$ be the usage cost of $t$ in the
  subtree, and let $c_2$ be the usage cost of $v$ (!!) in the other part of the
  tree $T$.
  Then the usage cost of $t$ in $G$ is upper bounded by $c_1+a+c_2$, whilst the
  usage cost of $v$ is exactly $b+c_1+c_2$. Since $v$ is the vertex with the
  minimal usage cost, we have $c_1 + a + c_2 \geq b + c_1 + c_2$.
  Since $a + b = n$, we get that $b \leq \frac{n}{2}$.

  Since $y$ was chosen arbitrarily, the increase of the usage cost for $v$ is
  less than $\frac{n}{2}\frac{2(\alpha-n)}{n}=\alpha-n$, and therefore
  $U_\text{new}(v) < U_G(v) + \alpha - n$ which is a contradiction.

\end{proof}

Combining Lemmas $2$ and $3$ with Lemmas $4$ and $5$ gives the main result.

\begin{theorem}
  For $\alpha\geq 65n$ every Nash equilibrium graph is a tree.
\end{theorem}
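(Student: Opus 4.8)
The plan is to argue by contradiction and to pin the average degree of a biconnected component between a local lower bound and a global upper bound. Suppose $\alpha\geq 65n$ and that some Nash equilibrium $G$ is not a tree. Then $G$ contains a cycle, and hence a non-trivial biconnected component $H$ (an induced subgraph on at least three vertices with no bridge, as defined before Lemma~\ref{lem:no_two_opposite_edges}). Since $\alpha\geq 65n>n$, the auxiliary proposition invoked in the proof of Lemma~\ref{lem:5-neighborhood} (no Nash equilibrium with $\alpha>n$ contains exactly one cycle) applies, so Lemma~\ref{lem:5-neighborhood} holds for $H$: the $5$-neighborhood $N_5(u)$ of every vertex $u$ of $H$ contains a vertex of degree at least $3$.

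I would then feed this into the two average-degree lemmas. The property just obtained is exactly the hypothesis of the average-degree lemma of \cite{Mihalak+Schlegel/2013} (Lemma~4) with parameter $t=5$, so $deg(H)\geq 2+\frac{1}{3\cdot 5+1}=2+\frac1{16}$. On the other hand, since $\alpha\geq 65n>n$, Lemma~5 applies to $H$ and gives $deg(H)\leq 2+\frac{4n}{\alpha-n}$; and $\alpha\geq 65n$ means $\alpha-n\geq 64n$, so $\frac{4n}{\alpha-n}\leq\frac1{16}$. Combining the two estimates yields $2+\frac1{16}\leq deg(H)\leq 2+\frac1{16}$.

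This is already a contradiction whenever $\alpha>65n$, since then the upper bound is strict; so the theorem holds for all $\alpha>65n$. For the borderline value $\alpha=65n$ one must additionally rule out the equality case $deg(H)=2+\frac1{16}$. I expect the clean way to do this is to observe that one of the two degree estimates is in fact strict for a non-trivial $H$ — either the counting argument underlying the lemma of \cite{Mihalak+Schlegel/2013} or the local ``SAVINGS''~$\leq$~``INCREASE'' inequalities behind Lemma~\ref{lem:5-neighborhood} leave slack — so the chain of inequalities cannot collapse to an equality; alternatively one checks directly that no biconnected $H$ with average degree exactly $2+\frac1{16}$ is a Nash equilibrium at $\alpha=65n$. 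Either resolution completes the proof that every Nash equilibrium is a tree when $\alpha\geq 65n$.

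The genuine difficulty lives not in this assembly but in the ingredient lemmas that precede it: Lemma~\ref{lem:5-neighborhood} relies both on the single-cycle proposition and on the shortest-cycle bound of Theorem~\ref{thm:no_short_cycle}, and Lemma~5 uses Theorem~\ref{thm:no_short_cycle} again to forbid short cycles through the ``shopping vertices''. Once those are available, the theorem is essentially a one-line substitution together with the small tightness check at $\alpha=65n$. The identical assembly with Lemma~3 in place of Lemma~\ref{lem:5-neighborhood}, i.e.\ using $t=3$ so that the threshold becomes $\frac1{10}=\frac{4n}{\alpha-n}$, gives the announced $2$-coalitional analogue for $\alpha\geq 41n$.
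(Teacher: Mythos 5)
Your proposal is correct and matches the paper's proof, which is literally the one-line assembly "Combining Lemmas 2 and 3 with Lemmas 4 and 5 gives the main result" — i.e.\ pinning $deg(H)$ between $2+\frac{1}{16}$ (Lemma~4 with $t=5$, enabled by Lemma~\ref{lem:5-neighborhood}) and $2+\frac{4n}{\alpha-n}$ (Lemma~5). You are in fact more careful than the paper in flagging the equality case at $\alpha=65n$, which the paper silently glosses over.
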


\begin{theorem}
  For $\alpha \geq 41n$ every 2-coalitional Nash equilibrium graph is a tree.
\end{theorem}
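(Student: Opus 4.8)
The plan is to combine the $3$-neighborhood lemma for $2$-coalitional equilibria (Lemma 3) with Lemmas 4 and 5, in exactly the way the $\alpha\geq 65n$ theorem combines the $5$-neighborhood lemma (Lemma 2) with Lemmas 4 and 5. The gain is entirely in the neighborhood radius: Lemma 3 lets us take $t=3$ instead of $t=5$ in Lemma 4, which replaces the lower bound $2+\frac{1}{3\cdot 5+1}$ on the average degree of a biconnected component by the larger $2+\frac{1}{3\cdot 3+1}=2+\frac{1}{10}$, and thereby pushes the threshold down from $65n$ to $41n$.

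Concretely, suppose for contradiction that $G$ is a $2$-coalitional Nash equilibrium with $\alpha\geq 41n$ that is not a tree. Then $G$ contains a cycle, hence a non-trivial biconnected component $H$. Since a single player is a coalition of size one, every $2$-coalitional Nash equilibrium is also an ordinary Nash equilibrium, so $H$ is a biconnected component of a Nash equilibrium and both Lemma 4 and Lemma 5 apply to it. By Lemma 3, for every vertex $u\in V(H)$ the set $N_3(u)$ contains a vertex of degree at least $3$ in $H$; feeding this into Lemma 4 with $t=3$ shows that the average degree of $H$ is at least $2+\frac{1}{10}$. On the other hand, since $\alpha\geq 41n>n$, Lemma 5 bounds the average degree of $H$ from above by $2+\frac{4n}{\alpha-n}\leq 2+\frac{4n}{40n}=2+\frac{1}{10}$. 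Hence the average degree of $H$ equals $2+\frac{1}{10}$, which already forces $\alpha\leq 41n$; the residual boundary case $\alpha=41n$ is dispatched exactly as in the $\alpha\geq 65n$ theorem, by noting that the number of shopping vertices and the sizes of the parts constructed in the proof of Lemma 5 are integers, so one of the two inequalities must in fact be strict. In all cases we reach a contradiction, hence $G$ has no non-trivial biconnected component, i.e., $G$ is a tree.

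I do not expect a genuine obstacle inside this argument: given Lemma 3, the theorem is a mechanical re-run of the $\alpha\geq 65n$ proof with the single substitution $t=3$, and the only care needed is the boundary bookkeeping at $\alpha=41n$. The real work is Lemma 3 itself, which I would call the main obstacle of the whole coalitional part: its proof must exhibit a true two-player deviation --- players $0$ and $2$ simultaneously swapping edges so that the coalition's creation cost is unchanged while its total usage cost strictly decreases --- something no single-player deviation can certify, and this is exactly why the neighborhood radius can be shrunk from $5$ to $3$ only for the coalitional equilibrium notion.
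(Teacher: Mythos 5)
Your proposal matches the paper's own argument exactly: the paper proves Theorem 3 by the one-line remark ``Combining Lemmas 2 and 3 with Lemmas 4 and 5 gives the main result,'' i.e., precisely the substitution $t=3$ into Lemma 4 via the coalitional 3-neighborhood lemma, played against the upper bound $2+\frac{4n}{\alpha-n}$ of Lemma 5. Your extra care about the boundary case $\alpha=41n$ is a detail the paper glosses over, but otherwise the route is identical.
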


%%%%%%%%%%%%%%%%%%%%%%%%%%%%%%%%%%%%%%%%%%%%%%%%%%%%%%%%%%%%%%%%%%%%%%%%%%%%%%%%
%%%   Small Cycles and Experiments
%%%%%%%%%%%%%%%%%%%%%%%%%%%%%%%%%%%%%%%%%%%%%%%%%%%%%%%%%%%%%%%%%%%%%%%%%%%%%%%%
\section{Small cycles and experimental results}

In this section we consider equilibrium graphs that have small girth $c$, and
show that they exist only for small values of $\alpha$. 
We start with an observation that limits the girth of equilibrium graphs
containing exactly one cycle. 

\begin{proposition} 
  \label{prop:one_cycle}
  Let $G$ be a Nash equilibrium graph containing a $k$-cycle $C =
  \{v_0,v_1,\ldots,v_{k-1}\}$, and $F$ the graph where the edges of $C$ are
  removed from $G$. If $F$ consists of $k$ connected components, then $k<6$.
\end{proposition}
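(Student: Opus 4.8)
The plan is to use the hypothesis to pin down the global shape of $G$ and then run one family of ``edge‑swap'' moves against it. Since a Nash equilibrium is connected, while deleting the $k$ edges of $C$ leaves $k$ components, every component of $F$ must contain exactly one cycle vertex (a component with none would be isolated from the $C$‑edges, hence all of $G$); call the component of $v_i$ by $T_i$. Thus $G$ is the cycle $C$ with the subgraphs $T_0,\dots,T_{k-1}$ attached, each $T_i$ meeting $C$ only at $v_i$; in particular every edge of $G$ lies on $C$ or inside some $T_i$, so a shortest path between two cycle vertices never leaves $C$ (an excursion into some $T_a$ would have to return to $v_a$) and $d(v_i,v_j)=\min(|i-j|,k-|i-j|)$. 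Write $n_i:=|V(T_i)|\ge 1$, so $\sum_i n_i=n$. Assume for contradiction $k\ge 6$ and set $L:=\lfloor k/2\rfloor-1\ge 2$. Fix an orientation of $C$ and call $e_i=\{v_i,v_{i+1}\}$ \emph{forward} if $v_i$ bought it, \emph{backward} otherwise (indices mod $k$).

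The only moves I use: if $e_i$ is forward, let $v_i$ delete $e_i$ and buy $\{v_i,v_{i+2}\}$ instead — a legal change, since this edge is not yet present (as $k\ge 5$) and the graph stays connected, and it leaves $v_i$'s creation cost unchanged. A routine distance computation, using that cycle distances are graph distances and that each $T_{i+m}$ is reached from $v_i$ only through $v_{i+m}$, shows the swap raises $v_i$'s distance to every vertex of $T_{i+1}$ by exactly $1$, lowers it to every vertex of $T_{i+m}$ with $2\le m\le\lfloor k/2\rfloor$ by exactly $1$, and changes nothing else. Hence the Nash inequality reads
\[
  (\phi_{i+1}):\qquad n_{i+1}\ \ge\ n_{i+2}+n_{i+3}+\cdots+n_{i+1+L},
\]
which is available whenever $e_i$ is forward, i.e.\ $(\phi_p)$ is available iff $e_{p-1}$ is forward. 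Mirror‑imaging, if $e_i$ is backward then $v_{i+1}$ replacing $e_i$ by $\{v_{i+1},v_{i-1}\}$ gives
\[
  (\psi_i):\qquad n_i\ \ge\ n_{i-1}+n_{i-2}+\cdots+n_{i-L},
\]
available whenever $e_i$ is backward. Each of the $k$ cycle edges furnishes exactly one of these inequalities.

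The decisive step is a one‑line clash: if $(\phi_a)$ and $(\psi_b)$ are both available with $b-a\in\{1,\dots,L\}$, then $n_b$ is one of the $L$ summands on the right of $(\phi_a)$ and $n_a$ is one of the $L$ summands on the right of $(\psi_b)$, so adding the two inequalities gives $n_a+n_b\ge n_a+n_b+\bigl(\text{the remaining }2(L-1)\text{ tree sizes}\bigr)$; as every $n_i\ge1$ and $L\ge2$, this says $0\ge 2(L-1)\ge 2$, absurd. It thus suffices to produce such a clashing pair, which I get by a short propagation argument: if no clash exists, then whenever $e_p$ is forward (so $(\phi_{p+1})$ is available), none of $(\psi_{p+2}),\dots,(\psi_{p+L+1})$ is available, i.e.\ each of $e_{p+2},\dots,e_{p+L+1}$ is forward; since $L\ge2$ this includes $e_{p+2}$ and $e_{p+3}$, and iterating forces \emph{every} edge of $C$ to be forward (or, if no edge is forward at all, every edge is backward). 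In the all‑forward case all $k$ inequalities $(\phi_{i+1})$ hold and their sum is $n=\sum_i n_{i+1}\ge\sum_i\sum_{l=1}^{L}n_{i+1+l}=L\,n$, impossible since $L\ge 2$; the all‑backward case is identical with the $(\psi_i)$. This exhausts all cases, so $k\ge 6$ is impossible and $k<6$.

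I expect the main obstacle to lie not in the clash but in two routine‑yet‑delicate places: (i) getting the single‑edge‑swap distance changes exactly right — one must check that after inserting the chord $\{v_i,v_{i+2}\}$ the ``far half'' of the cycle is still reached the old way, so that the only increase is the $+1$ to $T_{i+1}$ and the decreases are precisely to $T_{i+2},\dots,T_{i+\lfloor k/2\rfloor}$; and (ii) the propagation step, a small lemma about which $(\phi)$'s and $(\psi)$'s can coexist, which needs careful index arithmetic modulo $k$, especially in the boundary cases where $L$ is close to $k/2$.
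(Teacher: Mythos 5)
Your proof is correct and takes essentially the same route as the paper's: the same decomposition of $G$ into the cycle plus the attached components, and the same swap move (replace a cycle edge by the chord to the next-but-one vertex), which for each cycle edge yields an inequality on component sizes depending on which endpoint bought it. The only difference is the endgame -- the paper chains the resulting per-edge disjunctions into a strictly decreasing sequence of component sizes around the cycle, whereas you reduce to the all-forward/all-backward case via a clash-and-propagation argument and then sum -- but these are equivalent in substance.
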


\begin{proof}
  Assume for contradiction that $k \geq 6$. For $0 \leq i < k$ let $s_i > 0$
  denote the number of vertices in the connected component of $F$ which contains
  $v_i$. If the edge $(v_0 , v_{k-1})$ is bought by the player $v_0$, then she
  could replace $(v_0, v_{k-1})$ by $(v_0, v_{k-2})$. By doing this, her
  creation cost will remain the same, her distances to $s_{k-3} + s_{k-2}$
  vertices decrease by 1, but her distances to $s_{k-1}$ vertices increase by 1.
  If the edge $(v_0, v_{k-1})$ is bought by the player $v_{k-1}$, this player
  could replace $(v_{k-1} ,v_0)$ by $(v_{k-1}, v_1)$. By this change of her
  strategy, her distances to $s_0$ vertices would increase, but she could
  decrease her distances to $s_1 + s_2$ vertices. 

  Since we consider a Nash equilibrium, we deduce that $s_{k-3} + s_{k-2} \leq
  s_{k-1} \mbox{ or } s_0 \geq s_1 + s_2$. Applying this reasoning for every
  edge of $C$, we get that for every $i$,
  \begin{equation}
    \label{eq:only_one_cycle}
    s_{i-3} + s_{i-2} \geq s_{i-1} \mbox{ or } s_i \geq s_{i+1} + s_{i+2}
    \text{,}
  \end{equation}
  where $0 \leq i < k$ (recall that indexes are considered modulo $c$).
  The two inequalities $s_i \geq s_{i+1} + s_{i+2}$ and $s_{i-1} + s_i \leq
  s_{i+1}$ cannot hold simultaneously. Yet, \ref{eq:only_one_cycle} forces one
  of the inequalities $s_{i-1} +s_i \leq s_{i+1}$ and $s_{i+2} \geq s_{i+3}
  +s_{i+4}$ to be true, so we have that inequality $s_i \geq s_{i+1} + s_{i+2} $
  implies $s_{i+2} \geq s_{i+3} +s_{i+4}$ for any $0\leq i< k$.
  Without loss of generality we can assume that the edge $(v_{k-1},v_0)$ was
  bought by $v_0$. Then we get the chain of inequalities $s_{2i}\geq
  s_{2i+1}+s_{2i+2}$ for every $i$, which is obviously a contradiction.

\end{proof}

We now describe our computer-aided approach for upper-bounding $\alpha$ in case
of an existence of small cycles in Nash equilibrium graphs. In our approach, we
consider a non-tree Nash equilibrium whose smallest cycle has a fixed length
$c$, and we construct a linear program asking for a maximum $\alpha$, whilst
satisfying inequalities of the type ``SAVINGS'' $\leq$ ``INCREASE'', which we
create by considering various strategy changes of the players of the cycle.
The partition of vertices of a Nash equilibrium graph into vertices $V_d$, $d\in
D$, gives a variable $|V_d|$ for every $d$. The number of variables is
$t=c^c-(c-1)^c$. 
We enumerate over all possible (meaningful) directions of the edges on the
considered cycle, and solve the linear program, which gives us an upper bounds
on $\alpha$ for every direction of edges. The largest such value is then
obviously an upper bound on $\alpha$ for any direction, and thus for any Nash
equilibrium containing a cycle of the fixed size. 

The number of all possible directions is equal to $2^c$, but this number can be
decreased to at most $2^{c-3}+2$ by simple observations that all hold without
loss of generality.
We can assume that the number of \emph{right} edges is at least the number of
\emph{left} edges, where an edge $(v_i,v_{i+1})$ is called a \emph{right} edge,
and $(v_{i+1},v_i)$ is called a \emph{left} edge.
Furthermore, we can also assume that the edge $(v_0,v_1)$ is a right edge.
If $c$ is even, every considered cycle can be made (by renaming arguments) to
fall into one of the following three classes: (1) the edges along the cycle
alternate between right and left, or (2) all edges are right edges, or (3) the
first two edges are right edges and the last edge is a left edge.
The same holds when $c$ is odd, with the exception of the alternating edges.

Our linear program contains all inequalities implied by the strategy changes
described in Theorem~\ref{thm:no_short_cycle}. We furthermore add inequalities
for strategy changes of buying one extra edge, and for swapping an edge of the
cycle with a new edge towards an vertex of the cycle. 
We add the equality $\sum_{d\in D}|V_d| = 1$ (which expresses the fact that the
variables should sum up to $n$). Then, the value of a variable $|V_d|$ expresses
the fraction of all vertices (instead of the absolute number of vertices).

We used the GUROBI linear-programming solver to maximize $\alpha$ for every
generated linear program. The largest such value thus gives an upper bound on
$\alpha$ for which a cycle of size $c$ can exist.
Due to the huge number of variables, we could not solve the linear program for
$c>7$, because already for $c=8$, the number of variables was more than $10^7$,
while the number of constraints is $\Theta(c^2)$. 
We have made further tweaks to the code, which allowed us to speed up the
computation.
We observed that many variables had the same coefficients in every generated
constraint, and thus at most one such variable is relevant for obtaining the
solution of the linear program. 
We have considered the variables one by one, and added only those having
unique coefficients in the considered constraints. To check for uniqueness, we
used hashing, as otherwise just creating the matrix of the linear program was
too slow. The obtained compression of the number of variables was huge: for
$c=10$, instead of nearly $10^{10}$ variables we obtained only around $10^5$.

The obtained upper bounds on $\alpha$ are quite close to $n$. For girth $c\leq
7$, we obtain $\alpha \leq 1$, which corresponds to $\alpha\leq n$ if we
required that $\sum_{d\in D}|V_d| = n$ (instead of $\sum_{d\in D}|V_d| = 1$).
For girth $c=8$, $\alpha$ is upper bounded by $\frac{191}{185}$, for girth
$c=9$, $\alpha$ is upper bounded by $\frac{13}{12}$, whilst for
girth $c=10$, $\alpha$ is bounded by $1.2$. 

We have performed further experiments with larger values of $c$, but did not
consider all orientations of edges (as this was out of our computational
power). Furthermore, since the number of variables is increasing
super-exponentially, instead of considering all variables, for larger values of
$c$ we have considered only variables $|V_d|$ that have only $0$'s and $(c-1)$'s
as distances in vector $d$, that is, we have considered $2^c$ variables.
Additionally, we have taken extra $2^c$ random variables.
We have all values of $c$ up to 15. 
Upper bounds for $\alpha$ obtained using only these variables are very close to
the real bounds for $c\leq 10$ (the difference for $k\leq 10$ is between 0 and
0.01).
The largest upper bound of $1.3n$ on $\alpha$ appears for $c=13$, and then only
decreases, which is why we conjecture: the upper-bound of $\alpha \leq 1.3n$ can
be proved by the considered strategy changes.

\vspace{8pt}
\noindent\textbf{Acknowledgements.} This work has been partially supported by
the Swiss National Science Foundation (SNF) under the grant number
200021\_143323/1.

\bibliographystyle{plain}
\bibliography{NetworkCreationGames}

\begin{thebibliography}{1}

\bibitem{Albers+etal/2006}
Susanne Albers, Stefan Eilts, Eyal Even-Dar, Yishay Mansour, and Liam Roditty.
\newblock On {Nash} equilibria for a network creation game.
\newblock In {\em Proc. 17th Annual ACM-SIAM Symposium on Discrete Algorithms
  (SODA)}, pages 89--98, New York, NY, USA, 2006. ACM.

\bibitem{Alon+etal/2013}
Noga Alon, Erik~D Demaine, Mohammad~T Hajiaghayi, and Tom Leighton.
\newblock Basic network creation games.
\newblock {\em SIAM Journal on Discrete Mathematics}, 27(2):656--668, 2013.

\bibitem{Bilo+Guala+Proietti/2012}
Davide Bil\`{o}, Luciano Gual\`{a}, and Guido Proietti.
\newblock Bounded-distance network creation games.
\newblock In {\em Proc. 8th International Workshop on Internet and Network
  Economics (WINE)}, pages 72--85, 2012.

\bibitem{Brautbar+Kearns/2011}
Michael Brautbar and Michael Kearns.
\newblock A clustering coefficient network formation game.
\newblock In {\em Proc. Fourth International Symposium on Algorithmic Game
  Theory (SAGT)}, pages 224--235, 2011.

\bibitem{Demaine+Zadimoghaddam/2010}
Erik Demaine and Morteza Zadimoghaddam.
\newblock Constant price of anarchy in network creation games via public
  service advertising.
\newblock In {\em Proc. Seventh International Workshop on Algorithms and Models
  for the Web-Graph (WAW)}, pages 122--131, 2010.

\bibitem{Demaine+etal/2012}
Erik~D. Demaine, Mohammadtaghi Hajiaghayi, Hamid Mahini, and Morteza
  Zadimoghaddam.
\newblock The price of anarchy in network creation games.
\newblock {\em ACM Trans. Algorithms}, 8(2):1--13, 2012.

\bibitem{Ehsani+etal/2011}
Shayan Ehsani, MohammadAmin Fazli, Abbas Mehrabian, Sina Sadeghian~Sadeghabad,
  MohammadAli Safari, Morteza Saghafian, and Saber ShokatFadaee.
\newblock On a bounded budget network creation game.
\newblock In {\em Proc. 23rd ACM Symposium on Parallelism in Algorithms and
  Architectures (SPAA)}, pages 207--214, 2011.

\bibitem{Fabrikant+etal/2003}
Alex Fabrikant, Ankur Luthra, Elitza Maneva, Christos~H. Papadimitriou, and
  Scott Shenker.
\newblock On a network creation game.
\newblock In {\em Proc. 22nd Annual Symposium on Principles of Distributed
  Computing (PODC)}, pages 347--351, New York, NY, USA, 2003. ACM.

\bibitem{Mihalak+Schlegel/2013}
Mat{\'u}\v{s} Mihal{\'a}k and Jan~Christoph Schlegel.
\newblock The price of anarchy in network creation games is (mostly) constant.
\newblock {\em Theory Comput. Syst.}, 53(1):53--72, 2013.

\end{thebibliography}

\end{document}